\newtheorem{thm}{Theorem}[section]
\newtheorem{lem}[thm]{Lemma}
\newtheorem{cor}[thm]{Corollary}
\theoremstyle{definition}
\newtheorem{defn}[thm]{Definition}
\title{Algebraic Conditions for Generating Accurate Adjacency Arrays}
\author{\IEEEauthorblockN{Karia Dibert$^{1,2}$, Hayden Jansen$^{1,2}$, Jeremy Kepner$^{1,2,3}$}
\IEEEauthorblockA{$^1$MIT Lincoln Laboratory, Lexington, MA \\ $^2$MIT Mathematics Department, Cambridge, MA \\ $^3$MIT Computer Science \& AI Laboratory, Cambridge, MA}
%\and
%\IEEEauthorblockN{Hayden Jansen}
%\IEEEauthorblockA{MIT Mathematics Department, Cambridge, MA}
}
\begin{document}
\maketitle

%%%%%%%%%%%%%%ABS%%%%%%%%%%%%%%%%%%%%%%%%%%%
\begin{abstract}

Data processing systems impose multiple views on data as it is processed by the system.  These views include spreadsheets, databases, matrices, and graphs. Associative arrays unify and simplify these different approaches into a common two-dimensional view of data. Graph construction, a fundamental operation in the data processing pipeline, is typically done by multiplying the incidence array representations of a graph, $\mathbf{E}_\mathrm{in}$ and $\mathbf{E}_\mathrm{out}$, to produce an adjacency matrix of the graph that can be processed with a variety of machine learning clustering techniques. This work focuses on establishing the mathematical criteria to ensure that the matrix product  $\mathbf{E}_\mathrm{out}^\intercal\mathbf{E}_\mathrm{in}$ is the adjacency array of the graph.  It will then be shown that these criteria are also necessary and sufficient for the remaining nonzero product of incidence arrays, $\mathbf{E}_\mathrm{in}^\intercal\mathbf{E}_\mathrm{out}$ to be the adjacency matrices of the reversed graph.  Algebraic structures that comply with the criteria will be identified and discussed.
\end{abstract}

%%%%%%%%%%%%%%INTRO%%%%%%%%%%%%%%%%%%%%%%%%%%%
\section{Introduction}
As data moves through a processing system, it is viewed from different perspectives by different parts of the system. Data often are first parsed into a tabular spreadsheet form, then ingested into database tables, analyzed with matrix mathematics, and presented as graphs of relationships. Often, the majority of time spent in building a data processing system is in the interfaces between the various steps.  These interfaces require a conversion from one mathematical perspective on the data to another \cite{kepner3}. Fortunately, spreadsheets, databases, matrices, and graphs share a common mathematical structure: they all use two-dimensional data structures in which each data element can be specified with a triple denoted by a row, column, and value. This structure is encompassed by the associative array. By using the associative array as a common mathematical abstraction across all steps, the construction time of a data processing system can be reduced \cite{kepner3}.
\let\thefootnote\relax\footnotetext{This work is sponsored by the Assistant Secretary of Defense for Research and Engineering under Air Force Contract \#FA8721-05-C-0002.  Opinions, interpretations, recommendations and conclusions are those of the authors and are not necessarily endorsed by the United States Government.
}

\begin{figure}[htb]
\includegraphics[width=3.4in]{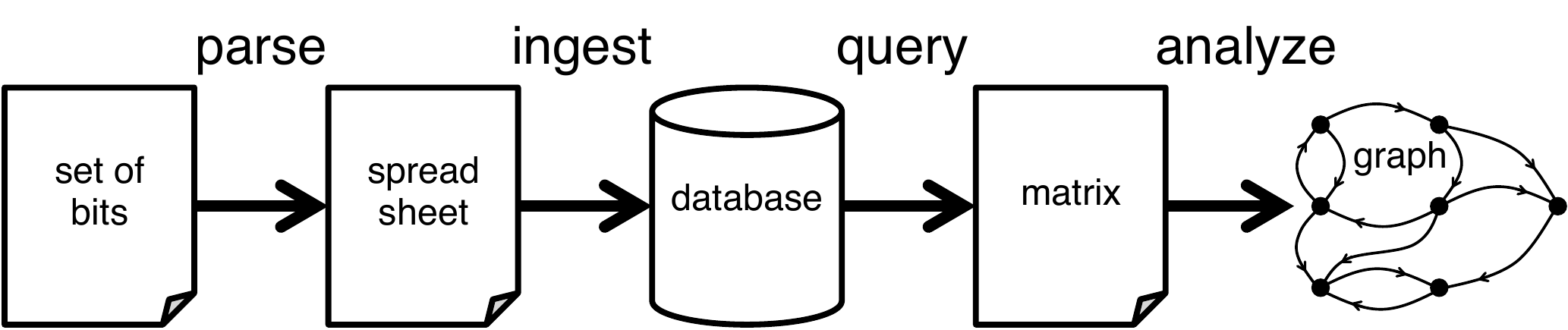}
\caption{The standard steps in a data processing system often require different perspectives on the data. Associative arrays enable a common mathematical perspective to be used across all the steps.}
\label{fig:pipeline}
\end{figure}

Associative arrays derive much of their power from their ability to represent data intuitively in easily understandable tables.
Two properties of associative arrays in particular are different from other two-dimensional arrangements of data.  First, each row label (or row key) and each column label (or column key) in an array is unique and sortable, thus allowing rows and columns to be queried efficiently. Secondly, because associative arrays contain no rows or columns that are entirely empty,  insertion, selection, and deletion of data can be performed by element-wise addition, element-wise multiplication, and array multiplication.

%%%%%%%%%%%%%%Definitions%%%%%%%%%%%%%%%%%%%%%%%%%%%
\subsection{Definitions} 

\begin{defn}
{\bf Associative Array}.  In \cite{kepner}, a 2-dimensional \textit{associative array} is defined as a map $\mathbf{A}: K_1 {\times} K_2 \to V$ from two finite, strict, totally ordered keysets to a value set with a commutative semi-ring structure.
\end{defn}

This paper will \textit{not} assume a full commutative semi-ring structure for the value set. It will be required only that the value set $V$ be closed under two binary operations and that $V$ contains the identity elements of these operations, denoted $0$ and $1$ respectively.

Associative arrays can represent graphs through both incidence arrays and adjacency arrays. The incidence arrays $\mathbf{E}_\mathrm{out}$ and $\mathbf{E}_\mathrm{in}$ of a graph map directed edges to vertices in the graph. 

\begin{defn}
{\bf Incidence Arrays}.
Suppose $G$ is a (possibly directed) weighted graph with edge set $K$ and vertex set $K_\mathrm{out} \cup K_\mathrm{in}$ where $K_\mathrm{out}$ is the set of vertices with positive out degree and $K_\mathrm{in}$ is the set of vertices with positive in degree.  

Then we say that the associative arrays $\mathbf{E}_\mathrm{out}: K{\times} K_\mathrm{out} \rightarrow V$ and $\mathbf{E}_\mathrm{in}: K {\times} K_\mathrm{in} \rightarrow V$ are the \emph{source incidence array} and \emph{target incidence array} of $G$ respectively if the following are true for all $k \in K$ and $a \in K_\mathrm{out} \cup K_\mathrm{in}$:

1.  $\mathbf{E}_\mathrm{out}(k, a)\neq 0$ if and only if edge $k$ is directed outward from vertex $a$ 
 
2.  $\mathbf{E}_\mathrm{in}(k, a)\neq 0$ if and only if edge $k$ is directed into vertex $a$.

Unless otherwise specified, $\mathbf{E}_\mathrm{out}$ and $\mathbf{E}_\mathrm{in}$ denote fixed source and target incidence arrays of a weighted, directed graph $G$.
\end{defn}

\begin{defn}
{\bf Adjacency Array}.
Given a graph $G$ with vertex set $K_1 \cup K_2$, we say that an associative array $\mathbf{A}: K_1 {\times} K_2 \to V$ is an \emph{adjacency array} of the graph $G$ if, for $a,b \in K_1 \cup K_2$, $\mathbf{A}(a,b) \neq 0$ if and only if $a$ is adjacent to $b$.  In a directed graph, this means that $\mathbf{A}(a,b) \neq 0$ if and only if there is an edge directed out of vertex $a$ and into vertex $b$.
\end{defn}

Incidence arrays are often readily obtained from raw data. In many cases, an associative array representing a spreadsheet or database table is already in the form of an incidence array. In order to analyze  a graph, it is often convenient to represent the graph as an adjacency array.  
%Often these are cases in which $\mathbf{E}_\mathrm{out}=\mathbf{E}_\mathrm{in}$.
Constructing an adjacency array from data stored in an incidence array via array multiplication is one of the most common and important steps in a data processing pipeline.

Multiplication of associative arrays is defined as in \cite{kepner}, where for associative arrays $\mathbf{A}: K_1{\times} K_2\rightarrow V$, $\mathbf{B}:K_2{\times} K_3 \rightarrow V$, and $\mathbf{C}:K_1{\times} K_3\rightarrow V$, the statement $\mathbf{C}=\mathbf{A} {\oplus}.{\otimes} \mathbf{B}$ means that $\mathbf{C}(i,j)=\bigoplus\limits_{k \in K} \mathbf{A}(i,k) \otimes \mathbf{B}(k,j)$. The array product $\mathbf{A} {\oplus}.{\otimes} \mathbf{B}$ will henceforth be denoted $\mathbf{AB}$ for simplicity.

Given a graph $G$ with vertex set $K_\mathrm{in}\cup K_\mathrm{out}$ and edge set $K$, the construction of adjacency arrays for $G$ relies on the assumption that $\mathbf{E}_\mathrm{out}^\intercal\mathbf{E}_\mathrm{in}$ is an adjacency array of $G$.  This assumption is certainly true in the most common case where the value set is composed of non-negative reals and the operations $\oplus$ and $\otimes$ are arithmetic plus ($+$) and arithmetic times (${\times}$) respectively. However, one hallmark of associative arrays is their ability to contain as values non-traditional data. For these value sets, $\oplus$ and $\otimes$ may be redefined to operate on non-numerical values. 

For example, \cite{kepner} suggests the value set be the set of all alphanumeric strings, with $\oplus = max()$ and $\otimes = min()$. It is not immediately apparent in this case whether $\mathbf{E}_\mathrm{out}^\intercal\mathbf{E}_\mathrm{in}$ is an adjacency array of the graph whose set of vertices is $K_\mathrm{out} \cup K_\mathrm{in}$. The focus of this paper is to determine the criteria on the value set $V$ and the operations $\oplus$ and $\otimes$ so that the above assumption is always true.

%%%%%%%%%%%%%%PROBLEM STATEMENT%%%%%%%%%%%%%%%%%%%%%%%%%%%
\subsection{Problem Statement} 
 Let $G$ be a graph with source and target incidence arrays $\mathbf{E}_\mathrm{out}$ and $\mathbf{E}_\mathrm{in}$, respectively.  An entry of $\mathbf{E}_\mathrm{out}^\intercal\mathbf{E}_\mathrm{in}$, called $\mathbf{E}_\mathrm{out}^\intercal\mathbf{E}_\mathrm{in}(x,y)$ for $x \in K_\mathrm{out}$, and $y \in K_\mathrm{in}$ may be expressed as
\[\mathbf{E}_\mathrm{out}^\intercal\mathbf{E}_\mathrm{in}(x,y)= \bigoplus\limits_{k \in K} \mathbf{E}_\mathrm{out}(k,x) \otimes \mathbf{E}_\mathrm{in}(k,y)\]
For $\mathbf{E}_\mathrm{out}^\intercal\mathbf{E}_\mathrm{in}$ to be the adjacency array of $G$, the entry $\mathbf{E}_\mathrm{out}^\intercal\mathbf{E}_\mathrm{in}(x,y)$ must be nonzero if and only if $x$ is adjacent to $y$, which is identical to saying that the entry must be nonzero if and only if the following is true: 
\[\exists k \in K \text{ so that } \mathbf{E}_\mathrm{out}(k,x) \neq 0 \text{ and } \mathbf{E}_\mathrm{in}(k,y) \neq 0\]
Then, the problem ultimately becomes: Under what conditions is the following statement true for all graphs with edge set $K$, vertex set $K_\mathrm{out} \cup K_\mathrm{in}$, and weights in a set $V$ that is closed under two binary operations, $\oplus$ and $\otimes$, and contains the identity elements of these operations, denoted $0$ and $1$ respectively?
\begin{multline} 
\bigoplus_{k \in K} \mathbf{E}_\mathrm{out}(k,x) \otimes \mathbf{E}_\mathrm{in}(k,y) \neq 0 \iff \\ 
\exists k \in K \text{ so that } \mathbf{E}_\mathrm{out}(k,x) \neq 0  \text{ and } \mathbf{E}_\mathrm{in}(k,y) \neq 0 \label{problem}\end{multline}

The conditions under which this statement is true will now be established.

%%%%%%%%%%%%%%NECESSARY AND SUFFICIENT%%%%%%%%%%%%%%%%%%%%%%%%%%%
\section{Criteria}
The criteria on the set $V$ and the operations $\oplus$ and $\otimes$ such that the above statement is true are:

\begin{addmargin}{4em}
1. $V$ contains no non-trivial additive inverses:

$\nexists v, w \in V, \; v,w\neq 0$ such that $v \oplus w = 0$

\noindent 2. $V$ satisfies the zero product property:

$\forall a, b \in V, \; a,b \neq 0 \Rightarrow a \otimes b \neq 0$

\noindent 3. Zero annihilates $V$

$\forall v \in V, \; v \otimes 0=0 \otimes v =0$
\end{addmargin}
\begin{thm}
The above criteria are necessary, given (\ref{problem}).
\label{necessary}
\end{thm}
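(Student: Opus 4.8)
The plan is to prove each criterion by contraposition: assuming that (\ref{problem}) holds for every graph with weights in $V$, I would show that a violation of any one criterion lets me build an explicit small graph whose incidence arrays falsify (\ref{problem}). The two structural facts I would lean on repeatedly are that a single edge directed from $x\in K_\mathrm{out}$ to $y\in K_\mathrm{in}$ makes $x$ adjacent to $y$ (so the right-hand side of (\ref{problem}) is true), and that the multiplicative identity satisfies $v\otimes 1 = 1\otimes v = v$ and hence $0\otimes 1 = 1\otimes 0 = 0$; the latter will be used to annihilate unwanted cross terms. Throughout I assume the nondegeneracy $1\neq 0$, without which no edge could carry the weight $1$.

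For Criterion 2 I would take the simplest witness: if there were $a,b\neq 0$ with $a\otimes b=0$, form the one-edge graph with edge $k$ from $x$ to $y$ and weights $\mathbf{E}_\mathrm{out}(k,x)=a$, $\mathbf{E}_\mathrm{in}(k,y)=b$. The $(x,y)$ entry is then the single term $a\otimes b=0$, yet $x$ is adjacent to $y$, so the right-hand side of (\ref{problem}) holds while the left-hand side fails, a contradiction. For Criterion 1, suppose $v,w\neq 0$ with $v\oplus w=0$; I would use two parallel edges $k_1,k_2$, both from $x$ to $y$, with $\mathbf{E}_\mathrm{out}(k_1,x)=v$, $\mathbf{E}_\mathrm{out}(k_2,x)=w$, and $\mathbf{E}_\mathrm{in}(k_1,y)=\mathbf{E}_\mathrm{in}(k_2,y)=1$. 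The $(x,y)$ entry collapses to $(v\otimes 1)\oplus(w\otimes 1)=v\oplus w=0$, again with $x$ adjacent to $y$, contradicting (\ref{problem}).

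Criterion 3 is where I expect the real obstacle, and it drives the choice of construction. To detect a violation $v\otimes 0\neq 0$ I want an entry of $\mathbf{E}_\mathrm{out}^\intercal\mathbf{E}_\mathrm{in}$ whose value is exactly $v\otimes 0$ at a position $(x,y)$ where $x$ is \emph{not} adjacent to $y$. The natural move, an edge out of $x$ that does not reach $y$, is not enough on its own, because $y\in K_\mathrm{in}$ must have positive in-degree, forcing a second edge into $y$ that contributes an unwanted cross term $0\otimes(\cdot)$ to the $(x,y)$ sum; and with only two edges available (one is provably insufficient, since the lone edge's source and target would be $x$ and $y$, making them adjacent) this cross term cannot simply be dropped. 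The resolution is to give that second edge the in-weight $1$, so that its contribution becomes $0\otimes 1=0$ and the $(x,y)$ entry reduces to $v\otimes 0$. Concretely I would take $k_1$ from $x$ to a vertex $z\neq y$ with $\mathbf{E}_\mathrm{out}(k_1,x)=v$, and $k_2$ from a vertex $w\neq x$ to $y$ with $\mathbf{E}_\mathrm{in}(k_2,y)=1$; then the $(x,y)$ entry is $(v\otimes 0)\oplus(0\otimes 1)=v\otimes 0$, and since $x$ is not adjacent to $y$, (\ref{problem}) forces $v\otimes 0=0$. The symmetric failure $0\otimes v\neq 0$ is handled by the mirror-image graph, placing the weight $1$ on the out-side of $k_1$ so that its term is $1\otimes 0=0$ and the entry reduces to $0\otimes v$. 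Together these establish both annihilation directions.
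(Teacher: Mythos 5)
Your proposal is correct, and for the first two criteria it is essentially the paper's own proof: your criterion-1 witness (two parallel edges with out-weights $v,w$ and in-weights $1$, giving $(v\otimes 1)\oplus(w\otimes 1)=v\oplus w=0$) is identical to the paper's lemma, and your criterion-2 witness (a single edge carrying $v$ and $w$) differs from the paper's self-loop only cosmetically. Where you genuinely diverge is criterion 3, and your version is the more careful one. The paper takes a graph with edge set $\{k\}$, a self-loop at $a$, and a second vertex $b$, then evaluates $\mathbf{E}_\mathrm{out}^\intercal\mathbf{E}_\mathrm{in}(a,b)=v\otimes 0$; but $b$ is isolated in that graph, so under the paper's own Definition of incidence arrays (the vertex set is $K_\mathrm{out}\cup K_\mathrm{in}$, vertices of positive out- or in-degree) the key $b$ should not appear at all, and its column would be empty in both arrays, contrary to the stated convention that associative arrays contain no empty rows or columns. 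Your two-edge construction ($k_1\colon x\to z$ with out-weight $v$, $k_2\colon w\to y$ with in-weight $1$) repairs exactly this defect: every vertex has positive degree, $x$ is not adjacent to $y$, and the unwanted cross term is annihilated using only $0\otimes 1=1\otimes 0=0$ and $u\oplus 0=u$, which follow from $0$ and $1$ being the identities of $\oplus$ and $\otimes$ — the paper's standing assumptions, with no appeal to distributivity or commutativity. The only price is your explicit nondegeneracy hypothesis $1\neq 0$, which the paper needs implicitly anyway, since its criterion-1 lemma already assigns in-weights equal to $1$ and treats them as nonzero. In short: same decomposition of (\ref{problem}) into the two directions, same witnesses for criteria 1 and 2, and a slightly longer but strictly more rigorous witness for criterion 3.
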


%%%%%%%%%%%%%%NECESSARY%%%%%%%%%%%%%%%%%%%%%%%%%%%
\begin{proof}
Assume (\ref{problem}). This is equivalent to
\begin{multline}
\bigoplus\limits_{k \in K} \mathbf{E}_\mathrm{out}(k,x) \otimes \mathbf{E}_\mathrm{in}(k,y) = 0 \iff \\ 
\nexists k \in K \, \mbox{so that} \, \mathbf{E}_\mathrm{out}(k,x) \neq 0 \text{ and } \mathbf{E}_\mathrm{in}(k,y) \neq 0 \end{multline}
which in turn is equivalent to
\begin{multline}
\bigoplus\limits_{k \in K} \mathbf{E}_\mathrm{out}(k,x) \otimes \mathbf{E}_\mathrm{in}(k,y) = 0 \iff \\
\forall k \in K, \mathbf{E}_\mathrm{out}(k,x) = 0 \text{ or } \mathbf{E}_\mathrm{in}(k,y) = 0
\end{multline}
This expression may be split up into two conditional statements
\begin{multline} 
\bigoplus\limits_{k \in K} \mathbf{E}_\mathrm{out}(k,x) \otimes \mathbf{E}_\mathrm{in}(k,y) = 0 \Rightarrow \\
 \forall k \in K, \mathbf{E}_\mathrm{out}(k,x) = 0 \text{ or }  \mathbf{E}_\mathrm{in}(k,y) = 0 \label{leftright}
 \end{multline}
and
\begin{multline} 
\forall k \in K, \mathbf{E}_\mathrm{out}(k,x) = 0 \, \mbox{or} \, \mathbf{E}_\mathrm{in}(k,y) = 0 \Rightarrow \\
 \bigoplus\limits_{k \in K} \mathbf{E}_\mathrm{out}(k,x) \otimes \mathbf{E}_\mathrm{in}(k,y) = 0 \label{rightleft}
 \end{multline}

%%%%%%%%No non-trivial additive inverses%%%%%%%%%
\begin{lem}
The first criterion, that $V$ contain no non-trivial additive inverses, is necessary for (\ref{leftright}).
\end{lem}
\begin{proof}
\renewcommand{\qedsymbol}{}
Suppose there exist nonzero $v,w \in V$ such that $v \oplus w = 0$, or that non-trivial additive inverses exist.  Then we may choose  our graph $G$ to have edge set $\{k_1,k_2\}$ and vertex set $\{a,b\}$, where both $k_1,k_2$ start from $a$ and end at $b$.  Then defining $\mathbf{E}_\mathrm{out}(k_1,a)=v$, $\mathrm{E}_\mathrm{out}(k_2,a)=w$, and $\mathrm{E}_\mathrm{in}(k_i,b)=1$ give us proper source and target incidence arrays for $G$.  Moreover, we find
\begin{equation*}
\mathbf{E}_\mathrm{out}^\intercal\mathbf{E}_\mathrm{in}(b,a)=(v\otimes 1)\oplus (w\otimes 1)=v\oplus w=0
\end{equation*}
which contradicts (\ref{leftright}).  Therefore, no such non-zero $v$ and $w$ may be present in $V$, meaning it is necessary that $V$ contain no non-trivial additive inverses.
\end{proof}

%%%%%%%zero product property%%%%%%%%%%%
\begin{lem}
The second criterion, that $V$ satisfies the zero product property, is necessary for (\ref{leftright}).
\end{lem}
\begin{proof}
\renewcommand{\qedsymbol}{}
Suppose there exist nonzero $v, w \in V$ such that $ v \otimes w = 0$.  Then we may choose our graph $G$ to have edge set $\{k\}$ and vertex set $\{a\}$, with a single self-loop given by $k$.  Then defining $\mathbf{E}_\mathrm{out}(k,a)=v$ and $\mathbf{E}_\mathrm{in}(k,a)=w$ give us source and target incidence arrays for $G$.  Moreover, we find
\begin{equation*}
\mathbf{E}_\mathrm{out}^\intercal\mathbf{E}_\mathrm{in}(a,a)=\mathbf{E}_\mathrm{out}(k,a)\otimes \mathbf{E}_\mathrm{in}(k,a)=v\otimes w= 0
\end{equation*}
which contradicts (\ref{leftright}). Therefore, no such $v$ and $w$ may be present in $V$, so $V$ satisfies the zero-product property.
\end{proof}

%%%%%%%zero annihilates %%%%%%%%%%%
\begin{lem}
The third criterion, that the additive identity must annihilate $V$, is necessary for (\ref{rightleft}).
\end{lem}
\begin{proof}
\renewcommand{\qedsymbol}{}
Suppose there exists non-zero $v \in V$ such that $v \otimes 0 \neq 0$ or $0 \otimes v \neq 0$. Suppose $v\otimes 0 \neq 0$.  Then we may choose our graph $G$ to have edge set $\{k\}$ and vertex set $\{a,b\}$, with a single self-loop at $a$ given by $k$.  Then defining $\mathbf{E}_\mathrm{out}(k,a)=v=\mathbf{E}_\mathrm{in}(k,a)$ give us source and target incidence arrays for $G$.  Moreover, we find
\begin{equation*}
\mathbf{E}_\mathrm{out}^\intercal\mathbf{E}_\mathrm{in}(a,b)=\mathbf{E}_\mathrm{out}(k,a)\otimes \mathbf{E}_\mathrm{in}(k,b)=v\otimes 0 \neq 0
\end{equation*}
which contradicts (\ref{rightleft}), and so we must have $v\otimes 0=0$.  By considering $\mathbf{E}_\mathrm{out}^\intercal\mathbf{E}_\mathrm{in}(b,a)$, we can show that $0\otimes v=0$ as well.
\end{proof}
\end{proof}

%%%%%%%%%%%%%%SUFFICIENT%%%%%%%%%%%%%%%%%%%%%%%%%%%
\begin{thm} \label{sufficient}
The above criteria are sufficient for (\ref{problem}).
\end{thm}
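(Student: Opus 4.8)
The plan is to prove the biconditional (\ref{problem}) by establishing its two directions separately, reusing the split into (\ref{leftright}) and (\ref{rightleft}) from the necessity proof but now reading each implication in the forward direction granted the three criteria. Throughout, I would write $a_k = \mathbf{E}_\mathrm{out}(k,x) \otimes \mathbf{E}_\mathrm{in}(k,y)$ for the $k$-th summand, and recall that the keyset $K$ is totally ordered, so $\bigoplus_{k \in K} a_k$ denotes a fixed iterated sum, say left-associated in the key order, with $0$ serving as a two-sided identity for $\oplus$.

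First I would dispatch the easy direction, namely the contrapositive of the forward implication of (\ref{problem}), which is exactly (\ref{rightleft}). Assuming that for every $k$ at least one of $\mathbf{E}_\mathrm{out}(k,x)$, $\mathbf{E}_\mathrm{in}(k,y)$ equals $0$, criterion 3 (zero annihilates $V$) forces every summand $a_k = 0$, and since $0$ is the identity of $\oplus$ we obtain $\bigoplus_{k\in K} a_k = 0$. This single short step handles (\ref{rightleft}) and uses only the annihilation criterion.

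Next I would prove the reverse direction, the contrapositive of (\ref{leftright}): assuming some $k_0$ satisfies $\mathbf{E}_\mathrm{out}(k_0,x)\neq 0$ and $\mathbf{E}_\mathrm{in}(k_0,y)\neq 0$, I must show $\bigoplus_{k\in K} a_k \neq 0$. Let $S = \{k \in K : \mathbf{E}_\mathrm{out}(k,x)\neq 0 \text{ and } \mathbf{E}_\mathrm{in}(k,y)\neq 0\}$, which is nonempty by hypothesis. For $k \in S$, criterion 2 (the zero product property) gives $a_k \neq 0$; for $k \notin S$, criterion 3 gives $a_k = 0$. Because $0$ is a two-sided identity for $\oplus$, the zero summands can be dropped one at a time from the iterated sum, so $\bigoplus_{k\in K} a_k$ reduces to the left-associated iterated sum of the nonzero terms indexed by $S$.

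The crux, and the step I expect to be the main obstacle, is then showing that an iterated sum of finitely many nonzero elements of $V$ is itself nonzero: criterion 1 only forbids a single nonzero pair from summing to $0$, and since $V$ is not assumed associative or commutative this must be bootstrapped with care. I would argue by induction on $|S|$ along the fixed key order. The one-term sum is nonzero since its only term is nonzero; and if the left-associated sum $s$ of the first $n-1$ nonzero terms is nonzero by the inductive hypothesis, then appending the $n$-th nonzero term $v$ produces $s \oplus v$ with both $s \neq 0$ and $v \neq 0$, which is nonzero by criterion 1. Hence the full sum is nonzero, completing (\ref{leftright}) and, together with the first step, the biconditional (\ref{problem}).
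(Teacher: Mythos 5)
Your proposal is correct and follows essentially the same route as the paper: the same split into the two implications (\ref{leftright}) and (\ref{rightleft}), with criterion 3 handling (\ref{rightleft}) and criteria 1 and 2 together handling the contrapositive of (\ref{leftright}). The only difference is that your induction along the key order makes explicit a step the paper's proof leaves implicit --- namely why, absent associativity and commutativity, an iterated $\oplus$-sum of finitely many nonzero terms is nonzero --- which is a welcome refinement rather than a different approach.
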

\begin{proof}
Assume that zero is an annhilator, that there are no non-trivial additive inverses in $V$, and that $V$ satisfies the zero-product property.

\noindent Nonexistence of non-trivial additive inverses and the zero product property give
\begin{multline}
\exists k \in K \text{ so that } \mathbf{E}_\mathrm{out}(k,x) \neq 0 \text{ and } \mathbf{E}_\mathrm{in}(k,y) \neq 0 \Rightarrow \\
\bigoplus\limits_{k \in K} \mathbf{E}_\mathrm{out}(k,x) \otimes \mathbf{E}_\mathrm{in}(k,y) \neq 0 
\end{multline}
which is the contrapositive of (\ref{leftright}). And, that zero is an annihilator gives
\begin{multline}
\forall k \in K, \mathbf{E}_\mathrm{out}(e,x) = 0 \text{ or } \mathbf{E}_\mathrm{in}(e,y) = 0 \Rightarrow \\
\bigoplus\limits_{k \in } \mathbf{E}_\mathrm{out}(k,x) \otimes \mathbf{E}_\mathrm{in}(k,y) = 0 
\end{multline}
which is (\ref{rightleft}). 
As (\ref{leftright}) and (\ref{rightleft}) combine to form (\ref{problem}), it is established that the conditions are sufficient for (\ref{problem}).
\end{proof}

%%%%%%%%%%%%%%VARIATIONS%%%%%%%%%%%%%%%%%%%%%%%%%%%
\subsection{Corollary for Remaining Product}

The remaining product of the incidence arrays that may be nonzero is $\mathbf{E}_\mathrm{in}^\intercal\mathbf{E}_\mathrm{out}$. The above requirements will now be shown to be necessary and sufficient for the remaining product to be the adjacency array of the reverse of the graph. For the following corollaries, let $G$ be a graph with incidence matrices $\mathbf{E}_\mathrm{out}$ and $\mathbf{E}_\mathrm{in}$.

\begin{cor}
The established requirements are necessary and sufficient so that $\mathbf{E}_\mathrm{in}^\intercal\mathbf{E}_\mathrm{out}$ is the adjacency matrix of the reverse of $G$. 
\label{revcor}
\end{cor}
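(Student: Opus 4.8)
The plan is to reduce the corollary to the two theorems already established, by recognizing that $\mathbf{E}_\mathrm{in}$ and $\mathbf{E}_\mathrm{out}$ of $G$ are themselves the source and target incidence arrays of another graph, namely the reverse $G^R$. First I would make precise the relationship between the incidence arrays of $G$ and those of $G^R$. Reversing $G$ flips the direction of every edge, so an edge $k$ is directed out of vertex $a$ in $G^R$ exactly when it is directed into $a$ in $G$, and directed into $a$ in $G^R$ exactly when it is directed out of $a$ in $G$. By the defining property in the Incidence Arrays definition, this means that the source incidence array of $G^R$ coincides with $\mathbf{E}_\mathrm{in}$ and the target incidence array of $G^R$ coincides with $\mathbf{E}_\mathrm{out}$. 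The keyset bookkeeping is consistent: the vertices of positive out-degree in $G^R$ are exactly $K_\mathrm{in}$ and those of positive in-degree are exactly $K_\mathrm{out}$, which match the domains of $\mathbf{E}_\mathrm{in}$ and $\mathbf{E}_\mathrm{out}$ respectively.

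With this identification in hand, the key step is to rewrite the product of interest. Writing $\mathbf{E}_\mathrm{out}^{G^R}$ and $\mathbf{E}_\mathrm{in}^{G^R}$ for the incidence arrays of $G^R$, we have $\mathbf{E}_\mathrm{out}^{G^R}=\mathbf{E}_\mathrm{in}$ and $\mathbf{E}_\mathrm{in}^{G^R}=\mathbf{E}_\mathrm{out}$, so the product $\mathbf{E}_\mathrm{in}^\intercal\mathbf{E}_\mathrm{out}$ is literally the product $(\mathbf{E}_\mathrm{out}^{G^R})^\intercal\mathbf{E}_\mathrm{in}^{G^R}$ whose behavior is governed by Theorems \ref{necessary} and \ref{sufficient}. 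Applying Theorem \ref{sufficient} to $G^R$ shows that the three criteria force $\mathbf{E}_\mathrm{in}^\intercal\mathbf{E}_\mathrm{out}$ to be an adjacency array of $G^R$, and applying Theorem \ref{necessary} to $G^R$ shows the criteria are necessary for this. Because this holds for an arbitrary graph, and reversal is a bijection on graphs, both directions transfer without modification.

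I expect the only real obstacle to be stating the reversal correspondence carefully enough that no separate argument is required — in particular, confirming that Theorems \ref{necessary} and \ref{sufficient} were proved for an arbitrary graph and arbitrary source and target incidence arrays, so that they may be instantiated at $G^R$ directly rather than re-derived. Once that observation is granted, there are no calculations to perform: the result is a corollary in the literal sense, obtained by substituting $G \mapsto G^R$ and swapping the roles of the two incidence arrays, and in particular no new constraints on $V$, $\oplus$, or $\otimes$ arise.
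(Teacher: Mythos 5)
Your proposal is correct and takes essentially the same route as the paper: the paper's proof likewise observes that $\mathbf{E}_\mathrm{in}$ and $\mathbf{E}_\mathrm{out}$ serve as valid source and target incidence arrays of the reversed graph $R(G)$ and then invokes Theorem~\ref{sufficient}. If anything you are slightly more thorough, since you explicitly handle the necessity direction by applying Theorem~\ref{necessary} to $G^R$ (using that reversal is a bijection on graphs), a step the paper's proof leaves implicit.
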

\begin{proof}
\renewcommand{\qedsymbol}{}
Let $R(G)$ denote the reverse of $G$, and let $\bm{\mathcal{E}}_\mathrm{out}$ and $\bm{\mathcal{E}}_\mathrm{in}$ be source and target incidence arrays for $R(G)$, respectively.  Recall that $R(G)$ is defined to have the same edge and vertex sets as $G$ but changing the directions of the edges, i.e. if an edge $k$ leaves a vertex $a$ in $G$, then it enters $a$ in $R(G)$, and vice-a-versa.  As such, $\mathbf{E}_\mathrm{out}(k,a)\neq 0$ if and only if $\bm{\mathcal{E}}_\mathrm{in}\neq 0$, and likewise $\mathbf{E}_\mathrm{in}(k,a)\neq 0$ if and only if $\bm{\mathcal{E}}_\mathrm{out}\neq 0$.  As such, choosing $\mathbf{E}_\mathrm{out}=\bm{\mathcal{E}}_\mathrm{in}$ and $\mathbf{E}_\mathrm{in}=\bm{\mathcal{E}}_\mathrm{out}$ give valid target and source incidence matrices for $R(G)$, respectively.  

Then by \ref{sufficient} we find that $\bm{\mathcal{E}}_\mathrm{out}^\intercal\bm{\mathcal{E}}_\mathrm{in}=\mathbf{E}_\mathrm{in}^\intercal\mathbf{E}_\mathrm{out}$.
\end{proof}
 
%%%%%%%%%%%%%%FUTURE%%%%%%%%%%%%%%%%%%%%%%%%%%%
\section{Implications and Future Work}

It is now straightforward to identify algebraic structures that comply with the established criteria. Notably, all semi-rings without additive inverses and which satisfy the zero product property comply, such as $\mathbb{N}$ or $\mathbb{R}_{\geq 0}$ with the standard addition and multiplication as well as any linearly-ordered set with $\oplus$ and $\otimes$ given by $\max$ and $\min$, respectively.  Some non-examples, however, include the max-plus algebra or non-trivial Boolean algebras, which do not satisfy the zero-product property, or rings, which except for the zero ring have non-trivial additive inverses.  Furthermore, the value sets of associative arrays need not be defined exclusively as semi-rings, as several semi-ring-like structures satisfy the criteria. These structures may lack the properties of additive or multiplicative commutativity, additive or multiplicative associativity, or distributivity of multiplication over addition, which are not necessary to ensure that the product of incidence arrays yields an adjacency array.

\subsection{Note About Special Cases}
The criteria guarantee an accurate adjacency array for any dataset that satisfies them, regardless of value distribution in the incidence arrays. However, if the incidence arrays are known to possess a certain structure, it is possible to circumvent some of the conditions and still always produce adjacency arrays. For example, if each keyset of an undirected incidence array $\mathbf{E}$ is a list of documents and the array entries are sets of words shared by documents, then it is necessary that a word in $\mathbf{E}(i,j)$ and $\mathbf{E}(m,n)$ has to be in $\mathbf{E}(i,n)$ and $\mathbf{E}(m,j)$. This structure means that when multiplying $\mathbf{E}^\intercal\mathbf{E}$ using $\oplus = \cup$ and $\otimes = \cap$, a nonempty set will never be ``multiplied'' by (intersected with) a disjoint nonempty set. This eliminates the need for the zero-product property to be satisfied, as every multiplication of nonempty sets is already guaranteed to produce a nonempty set. The array produced will contain as entries a list of words shared by those two documents.

\subsection{Future Work}
Though the criteria ensure that the product of incidence arrays will be an adjacency array, they do not ensure that certain matrix properties hold. For example, the property $(\mathbf{AB})^\intercal=\mathbf{B}^\intercal\mathbf{A}^\intercal$ may be violated under these criteria, as $(\mathbf{E}_\mathrm{out}^\intercal\mathbf{E}_\mathrm{in})^\intercal$ is not necessarily equal to $\mathbf{E}_\mathrm{in}^\intercal\mathbf{E}_\mathrm{out}$. (For this matrix transpose property to always hold, the operation $\otimes$ would have to be commutative.) Future work would involve considering more matrix properties and assembling a stronger set of conditions to ensure that these properties hold. This would enable users to utilize such properties when performing analysis on data stored in associative arrays.

\end{document}